\newtheorem{theorem}{Theorem}[section]
\newtheorem{proposition}[theorem]{Proposition}
\newtheorem{corollary}[theorem]{Corollary}
\theoremstyle{definition}
\newtheorem{definition}[theorem]{Definition}
\newtheorem*{unnumrem}{Remark}
\begin{document}


\title{Entwining Yang-Baxter maps and integrable lattices}

\author{Theodoros E. Kouloukas and Vassilios G. Papageorgiou}

\maketitle
\begin{center}
\emph{Department of Mathematics, University of Patras \\ GR-265 00 Patras, \
Greece}
\end{center}

\bigskip

\begin{abstract}
Yang--Baxter (YB) map systems (or set-theoretic analoga of entwining YB structures) are presented.
They admit zero curvature representations with spectral parameter 
depended Lax triples $L_1, \ L_2, \ L_3$ derived from symplectic leaves of $2 \times 2$ binomial matrices 
equipped with the Sklyanin bracket. A unique factorization condition of the Lax triple  
implies a 3-dimensional compatibility property of these maps. In case  $L_1=L_2=L_3$ this property yields the set-theoretic 
quantum Yang-Baxter equation, i.e. the YB map property. By considering periodic `staircase' initial value problems on quadrilateral 
lattices, these maps give rise to multidimensional integrable mappings which preserve the spectrum of the corresponding 
monodromy matrix.   
\end{abstract}

\section{Introduction}
The connection between set theoretical solutions of the quantum Yang-Baxter equations
and integrable mappings has been recently investigated by many authors.
First Veselov in \cite{ves2,ves3} proved that for such
a solution, admitting a Lax matrix, there is a family of
commuting transfer maps which preserve the spectrum of the
corresponding monodromy matrix. He also proposed the shorter term `Yang Baxter map'
for a set-theoretic solution of the quantum Yang-Baxter equation. In the present article we
present system of maps that admit a Lax triple of matrices 
$L_1,\ L_2$, $L_3$. They satisfy 3-dimensional compatibility conditions that constitute 
set-theoretic analoga of entwining quantum Yang-Baxter equations. The latter term was 
introduced in \cite{Nich} for systems of quantum YB equations. Such entwining structures 
for the quantum Yang-Baxter equation appeared already in \cite{Frei-Mail, cnp, vlad} and their study continued in 
\cite{hlav, hlav2}. 

The basic definitions and notations about Lax pairs and integrable mappings are introduced in section 
\ref{secLax}.The transfer map and the monodromy matrix are
defined for a periodic `staircase' initial value problem on a lattice for each pair $L_i, ~ L_j$ and 
integrals of motion are obtained from the spectrum of the monodromy matrix, which are in involution with respect to the Sklyanin bracket.
Section \ref{sec3-d} deals with the the 3-d compatibility condition (entwining Yang-Baxter equation) 
of maps that admit Lax pairs giving rise to Lax triples. 
A construction of Lax triples and their corresponding set theoretical solutions of the entwining YB equation is presented in section \ref{secbin}, while in the next section the whole theory is applied to a concrete example. 
We end  in section \ref{persp} by giving some conclusions and perspectives for future work on this subject.

\section{Lax pairs and integrable mappings} \label{secLax}
Let $S$ be the map
\begin{equation} \label{mapphi}
S:((x,\alpha),(y,\beta))\mapsto((u,\alpha),(v,\beta))=
(u(x,\alpha, y,\beta),v(x,\alpha, y,\beta))
\end{equation}
where $x, \ y $ belong to a set $\mathcal{X}$, from our consideration the set $\mathcal{X}$ has the structure 
of an algebraic variety,  and the parameters $\alpha, \beta \in
\mathbb{C}^m$. We usually keep the parameters separately and denote
$S((x,\alpha),(y,\beta))$ by $S_{\alpha,\beta}(x,y)$.
We can represent $S_{\alpha,\beta}(x,y)$ as a map assigned to the
edges of an elementary quadrilateral as in Fig.\ref{fig:map}.
\\
\begin{figure}[h]
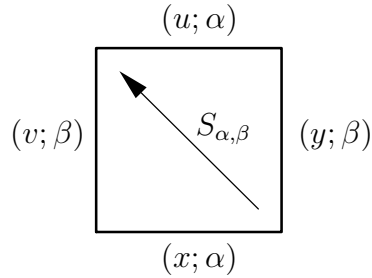

\centertexdraw{ \setunitscale 0.12 \linewd 0.01
\move (7 1) \linewd 0.05 \arrowheadtype t:F \avec(1 7) \lpatt( )
\move (0 0) \linewd 0.1 \lvec(8 0)  \lvec (8 8)  \lvec (0 8)  \lvec (0 0) \lpatt( )
\htext (2.8 -1.9){$(x;\alpha)$} \htext (8.71 3.5){$(y;\beta)$}
\htext (2.8 8.5){$(u;\alpha)$} \htext (-3.7 3.5){$(v;\beta)$}
\htext (4.3 3.8){$S_{\alpha,\beta}$}
}
\caption{A map assigned to the edges of a quadrilateral}
\label{fig:map}
\end{figure}
\begin{definition}
The ordered pair of square matrices $(L(x,\alpha,\zeta) ,  M(x, \alpha, \zeta))$ that depends on a point $x \in \mathcal{X}$, on a parameter
$\alpha \in \mathbb{C}^m$ and on a spectral parameter $\zeta \in \mathbb{C}$ is called a {\em Lax pair}
for the parametric map $S_{\alpha,\beta}$, if
\begin{equation} \label{laxmat}
L(u,\alpha,\zeta)M(v,\beta,\zeta)=M(y,\beta,\zeta)L(x,\alpha,\zeta),
\end{equation}
for any $\zeta\in \mathbb{C}$. Furthermore if equation
is equivalent to $(u,\ v)=S_{\alpha,\beta}(x,y)$ then
we will call $(L(x,\alpha,\zeta) ,  M(x, \alpha, \zeta))$ a {\em strong Lax pair}.
\end{definition}
We usually omit the spectral parameter $\zeta$ and denote the
matrices $L(x,\alpha,\zeta)$ and $M(x, \alpha, \zeta)$ by
$L(x;\alpha)$ and $M(x;\alpha)$ respectively.
\begin{unnumrem}
It is instructive to think of a Lax pair as functions $L,M:\mathcal{X} \times \mathbb{C}^m
\rightarrow Mat({k \times k})$. 
In this consideration if $L=M$ the definition coincides with the definition of the Lax matrix 
which is given in \cite{ves4}. The aim of this work is to study the case where $L \neq M$.   
\end{unnumrem}
Let $S_{\alpha,\beta}$ be a map that admits the Lax pair
$L(x;\alpha) , \ M(x; \alpha)$. Next we consider the standard
periodic `staircase' initial value problem, as in \cite{pnc}, for
integrable lattices difference equations.  Initial values
$x_1,...x_n$ and $y_1,...,y_n$ are assigned to the edges of a
`staircase' on a quadrilateral lattice as in Fig. \ref{fig:mapping},
with periodic boundary conditions $x_{n+1}=x_1, \ y_{n+1}=y_1$. The
edge with the $x_i$ value carries the parameter $\alpha_i$, while 
the one with the $y_i$ value the parameter $\beta_i$, for
$i=1,...,n$. Now, having in mind the representation of
$S_{\alpha_i,\beta_i}$ to the edges of a quadrilateral as in Fig.
\ref{fig:map}, we can compute the values of the next level of the
lattice according to Fig. \ref{fig:mapping}. By $(x_i',y_i')$ we
denote the values $S_{\alpha_i,\beta_i}(x_i,y_i)$ for $i=1,...,n$,
while $k$-primed variables $x_i^{(k)}$ and $y_i^{(k)}$ denote the
corresponding values:
$$(x_i^{(k)},y_{j}^{(k)})=S_{\alpha_i,\beta_{j}}(x_i^{(k-1)},y_{j}^{(k-1)})$$
with $j \equiv i+k-1 (mod n)$. 

\bigskip 

\begin{figure}[h]
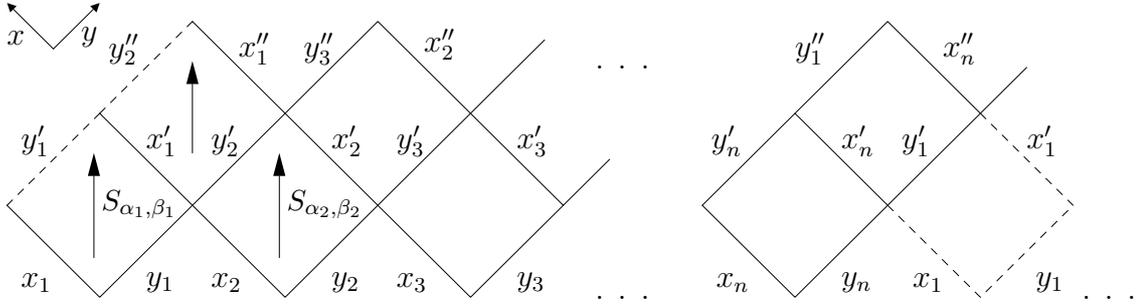

\centertexdraw{ \setunitscale 0.48 \linewd 0.01

\arrowheadsize l:0.1 w:0.1 \arrowheadtype t:F
\move (0.5 1,7) \avec (0 2.2)
\move (0.5 1,7) \avec (1 2.2)

\htext (0 1,76){$x$} \htext (0.8 1.74){$y$}
\move (0 0)  \lvec(1 -1)  \lvec (2 0)  \lvec (3 -1)  \lvec (4 0) \lvec (5 -1)  \lvec (6 0)
\lpatt( )

\htext (6.35 -1){. . .}

\move (7.5 0)  \lvec(8.5 -1)  \lvec (9.5 0) \lpatt(0.08) \lvec (10.5
-1) \lvec (11.5 0) \lpatt( ) \htext (11.6 -1){. . . }

\htext (0.15 -0.93){$x_1$} \htext (1.5 -0.93){$y_1$}\htext (2.2 -0.93){$x_2$} \htext (3.5 -0.93){$y_2$}
\htext (4.2 -0.93){$x_3$} \htext (5.5 -0.93){$y_3$}
\htext (7.65 -0.93){$x_n$} \htext (9 -0.93){$y_n$}\htext (9.77 -0.93){$x_1$} \htext (11.1 -0.93){$y_1$}

\lpatt(0.08) \move (0 0)  \lvec(1 1) \lpatt() \lvec (2 0)  \lvec (3
1)  \lvec (4 0) \lvec (5 1)  \lvec (6 0) \lvec (6.5 0.5)

\move (7.5 0)  \lvec(8.5 1)  \lvec (9.5 0)  \lvec (10.5 1)
\lpatt(0.08) \lvec (11.5 0) \lpatt( )

\lpatt(0.08) \move (1 1)  \lvec(2 2) \lpatt() \lvec (3 1)  \lvec (4
2)  \lvec (5 1) \lvec (5.8 1.8) \htext (6.35 1.5){. . . }

\move (8.5 1)  \lvec(9.5 2)  \lvec (10.5 1) \lvec (11 1.5)

\htext (0.15 0.55){$y_1'$} \htext (1.5 0.55){$x_1'$}\htext (2.2
0.55){$y_2'$} \htext (3.5 0.55){$x_2'$} \htext (4.2 0.55){$y_3'$}
\htext (5.5 0.55){$x_3'$} \htext (7.6 0.55){$y_n'$} \htext (9
0.55){$x_n'$}\htext (9.65 0.55){$y_1'$} \htext (11 0.55){$x_1'$}

\htext (1.1 1.54){$y_2''$}\htext (2.5 1.55){$x_1''$} \htext (3.2
1.55){$y_3''$} \htext (4.5 1.6){$x_2''$}  \htext (8.5
1.55){$y_1''$}\htext (10.1 1.55){$x_n''$}

\arrowheadsize l:0.25 w:0.13 \arrowheadtype t:F
 \move(0.935 -0.57) \avec(0.935 0.57)
\htext (1,02 -0.13) {\small{$S_{\alpha_1,\beta_1} $}} \move(2.935 -0.57) \avec(2.935 0.57) \htext (3.02
-0.13) {\small{$S_{\alpha_2,\beta_2}$}} \move(2 0.57) \avec(2 1.57)

}
 \caption{n-period mapping} \label{fig:mapping}
\end{figure}

\bigskip

For any $n$-periodic `staircase' initial value problem we define the 
{\em `transfer' map}:
\begin{equation} \label{transfer}
T_n:(x_1,...,x_n,y_1,...,y_n)\mapsto (x'_1,...,x'_{n},y'_{2},...,y'_n,y'_1)
\end{equation}
and the {\em k-`transfer' map}:
\begin{equation} \label{ktransfer}
T^{k}_n:(x_1,...,x_n,y_1,...,y_n)\mapsto (x^{(k)}_1,...,x^{(k)}_{n},
\underbrace{y^{(k)}_{d+1},...,y^{(k)}_n}_{n-d},\underbrace{y^{(k)}_1,...,y^{(k)}_d}_{d}), 
\end{equation}
with $d \equiv k (modn)$ and $T^{1}_n=T_n$. 
We observe that 
$$T_n^n(x_1,...,x_n,y_1,...,y_n)=(x_1^{(n)},...,x_n^{(n)},y_1^{(n)},...,y_n^{(n)}).$$
We also define the {\em monodromy matrix}
$M_n(x_1,...,x_n,y_1,...,y_n)= \overset{\curvearrowleft
}{\prod\limits_{i=1}^{n}}M(y_{i};\beta_i)L(x_{i};\alpha_i ),$ where
${\curvearrowleft }$ indicates that the elements
$M(y_{i};\beta)L(x_{i};\alpha )$ in the product are arranged from
right to left.

So, for example, for the 1-periodic initial value problem on the
lattice the transfer map will be
$T_1(x,y)=(x',y')=S_{\alpha,\beta}(x,y),$ with corresponding
monodromy matrix $M_1(x,y)=M(y;\beta)L(x;\alpha )$, while for the
2-periodic case, $$T_2(x_1,x_2,y_1,y_2)=(x'_1,x'_2,y'_2,y'_1),$$ where
$x'_i$ and $y'_i$ are given by:
$(x'_i,y'_i)=S_{\alpha_i,\beta_i}(x_i,y_i)$, for $i=1,2$. The
monodromy matrix in this case is
$M_2(x_1,x_2,y_1,y_2)=M(y_2;\beta_2)L(x_2;\alpha_2
)M(y_1;\beta_1)L(x_1;\alpha_1 )$.

The equivalent proposition to the one in \cite{ves2} for the
$n$-periodic `staircase' initial value problem holds.
\begin{proposition} \label{transfer}
The transfer map preserves the spectrum of the monodromy matrix.
\end{proposition}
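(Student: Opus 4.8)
The plan is to prove the sharper statement that $M_n(x_1,\dots,x_n,y_1,\dots,y_n)$ and $M_n$ evaluated at $T_n(x_1,\dots,x_n,y_1,\dots,y_n)$ are cospectral by exhibiting them, after a cyclic rearrangement of factors, as products $AB$ and $BA$ of the same two matrices. The only tool needed is the Lax relation \eqref{laxmat} itself. For the $i$-th elementary quadrilateral of the staircase, with $(x_i',y_i')=S_{\alpha_i,\beta_i}(x_i,y_i)$, relation \eqref{laxmat} reads $L(x_i';\alpha_i)\,M(y_i';\beta_i)=M(y_i;\beta_i)\,L(x_i;\alpha_i)$, so in the ordered product defining $M_n$ each block $M(y_i;\beta_i)L(x_i;\alpha_i)$ may be traded for $L(x_i';\alpha_i)M(y_i';\beta_i)$.

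Performing this substitution in all $n$ blocks of $M_n=M(y_n;\beta_n)L(x_n;\alpha_n)\cdots M(y_1;\beta_1)L(x_1;\alpha_1)$ produces the alternating string $L(x_n';\alpha_n)M(y_n';\beta_n)\cdots L(x_1';\alpha_1)M(y_1';\beta_1)$, which is exactly the string of Lax matrices read off along the next-level staircase. I would then recognise this string, after transporting the trailing factor $M(y_1';\beta_1)$ around to the front, as $M_n$ evaluated at the shifted configuration $(x_1',\dots,x_n',y_2',\dots,y_n',y_1')=T_n(x_1,\dots,x_n,y_1,\dots,y_n)$; the cyclic transport is precisely the shift of the $y$-slots built into the definition of $T_n$, and the periodic boundary conditions $x_{n+1}=x_1$, $y_{n+1}=y_1$ are what let the string close up again into a monodromy product. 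Concretely, writing $Q=L(x_n';\alpha_n)M(y_n';\beta_n)\cdots M(y_2';\beta_2)L(x_1';\alpha_1)$, one has $M_n=Q\,M(y_1';\beta_1)$ while $M_n\circ T_n=M(y_1';\beta_1)\,Q$, so the two differ by a conjugation (by $M(y_1';\beta_1)$ wherever that matrix is invertible) and in any case share the same characteristic polynomial; hence the spectrum of the monodromy matrix is unchanged.

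The substitution and the cyclic rearrangement are routine; the step that will need care is the bookkeeping of indices and parameters — checking that the instance of \eqref{laxmat} used at each site carries the correct $(\alpha_i,\beta_i)$, and tracking that the relabelling in the definition of $T_n$ carries the parameter $\beta_i$ along with the value $y_i'$, so that after the cyclic transport every $L$ and $M$ sits over the slot with the matching parameter (and after $n$ steps the labels return to their original positions). This is the same bookkeeping as in Veselov's argument \cite{ves2} for the case $L=M$; the one genuinely new feature is that $L$ and $M$ are now distinct matrices, but relation \eqref{laxmat} already keeps each of them in its correct position, so nothing beyond careful indexing is required.
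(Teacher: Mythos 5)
Your proof is correct and is essentially the paper's own argument: apply the Lax relation $L(x_i';\alpha_i)M(y_i';\beta_i)=M(y_i;\beta_i)L(x_i;\alpha_i)$ blockwise and recognize the resulting string as a cyclic rotation of $M_n\circ T_n$, so that $M_n=Q\,M(y_1';\beta_1)$ while $M_n\circ T_n=M(y_1';\beta_1)\,Q$. The only (minor) difference is that the paper states the conclusion as conjugation by $M(y_1';\beta_1)$, implicitly assuming its invertibility, whereas your $AB$-versus-$BA$ formulation dispenses with that assumption.
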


\begin{proof} \label{integ}
The definition of the transfer map and the monodromy matrix implies:
$$M_n(T_n(x_1,...,x_n,y_1,...,y_n))M(y'_1;\beta_1)=M(y'_1;\beta_1)
\overset{\curvearrowleft
}{\prod\limits_{i=1}^{n}}L(x'_{i};\alpha_i)M(y'_{i};\beta_i ).$$
Since $L(x;\alpha)$, $M(x;\alpha)$ is a Lax pair for the map and
$(x_i',y_i')=S_{\alpha_i,\beta_i}(x_i,y_i)$ we have that
$L(x'_{i};\alpha_i)M(y'_{i};\beta_i
)=M(y_{i};\beta_i)L(x_{i};\alpha_i )$, so
\begin{eqnarray*}
M_n(T_n(x_1,...,x_n,y_1,...,y_n))M(y'_1;\beta_1)&=& M(y'_1;\beta_1)
\overset{\curvearrowleft
}{\prod\limits_{i=1}^{n}}M(y_{i};\beta_i)L(x_{i};\alpha_i ) \\ &=&
M(y'_1;\beta_1)M_n(x_1,...,x_n,y_1,...,y_n),
\end{eqnarray*}
or $
M_n(T_n(x_1,...,x_n,y_1,...,y_n))=M(y'_1;\beta_1)M_n(x_1,...,x_n,y_1,...,y_n)M^{-1}(y'_1;\beta_1)$.
\end{proof}
Similarly proposition \ref{transfer} holds also for any $k$-transfer map. So
the spectrum of the monodromy matrix gives integrals of the transfer map. If we derive 
$N$ functionally independent integrals, 
the transfer map is integrable provided that 
a symplectic
structure $\omega$ exists in a 2N-dimensional phase space such that: 
i) the transfer map is symplectic with respect to $\omega$ and ii) the integrals are
in involution. In the case of polynomial Lax matrices that we are dealing bellow,
the Sklyanin bracket (\ref{sklyanin}) 
give rise to a symplectic structure after reduction to the symplectic leaves \cite{skly3}.

\section{Entwining Yang--Baxter equation and Lax triples} \label{sec3-d}

Now we consider three parametric maps of the form (\ref{mapphi}), 
$S_{\alpha,\beta}, \ R_{\alpha,\beta}, \ T_{\alpha,\beta}:
\mathcal{X} \times \mathcal{X} \rightarrow \mathcal{X} \times \mathcal{X}$. 
In correspondence with \cite{Nich} we call the equation 
\begin{equation} \label{eYBeq} 
T_{\beta,\gamma}^{23}\circ R_{\alpha,\gamma}^{13}\circ S_{\alpha,\beta}^{12}=
S_{\alpha,\beta}^{12}\circ R_{\alpha,\gamma}^{13}\circ T_{\beta,\gamma}^{23}, 
\end{equation} 
\textit{the entwining quantum Yang-Baxter equation} or just the entwining YB equation. 
Here by $S^{ij}$ (respectively $T^{ij}$ and $R^{ij}$) 
for $i,j=1,2,3$, we denote the map that acts as $S$ (resp. $T$ and $R$) 
on the $i$ and 
$j$ factor of $\mathcal{X} \times \mathcal{X} \times \mathcal{X}$
and identically on the others.  

A \textit{Lax triple} (resp. \textit{strong Lax triple}) 
of three maps $(S_{\alpha,\beta}, R_{\alpha,\beta},  T_{\alpha,\beta})$ 
is a triple of matrices $$(L_1(x;\alpha),  L_2(x;\alpha), L_3(x;\alpha)),$$ 
such that the pairs $(L_1(x;\alpha),  L_2(x;\alpha))$, $(L_1(x;\alpha),  L_3(x;\alpha))$, 
$(L_2(x;\alpha),  L_3(x;\alpha))$ are Lax pairs (resp. strong Lax pairs) of the maps 
$S_{\alpha,\beta}, \ R_{\alpha,\beta}, \ T_{\alpha,\beta}$ respectively. 

\begin{proposition} \label{3product}
Let 
$(S_{\alpha,\beta},  R_{\alpha,\beta},  T_{\alpha,\beta})$ be 
three maps on $\mathcal{X} \times \mathcal{X}$ that admit the Lax triple   
$(L_1(x;\alpha),  L_2(x;\alpha), L_3(x;\alpha)).$ 
If the equation 
\begin{equation}
L_1( \hat{x}; \alpha )L_2( \hat{y}; \beta )L_3(\hat{z}; \gamma )=
L_1(x; \alpha )L_2(y; \beta )L_3(z; \gamma ) \label{xyz}
\end{equation}
implies that $\hat{x}=x, \ \hat{y}=y$ and $\hat{z}=z$, then 
$S_{\alpha,\beta}, \ R_{\alpha,\beta}, \ T_{\alpha,\beta}$ 
satisfy the entwining YB equation (\ref{eYBeq}).
\end{proposition}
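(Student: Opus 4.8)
The plan is to evaluate both composite maps in (\ref{eYBeq}) on an arbitrary triple $(x,y,z)\in\mathcal{X}\times\mathcal{X}\times\mathcal{X}$ carrying the parameters $\alpha,\beta,\gamma$ on the first, second and third factor, and to show that the two output triples give the \emph{same} ordered matrix product $L_1(\,\cdot\,;\alpha)\,L_2(\,\cdot\,;\beta)\,L_3(\,\cdot\,;\gamma)$; the unique factorization hypothesis (\ref{xyz}) then forces the two outputs to coincide, which is exactly (\ref{eYBeq}).

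First I would run the left-hand side $T_{\beta,\gamma}^{23}\circ R_{\alpha,\gamma}^{13}\circ S_{\alpha,\beta}^{12}$, recording the intermediate edge values: set $(x_1,y_1)=S_{\alpha,\beta}(x,y)$, then $(x_2,z_1)=R_{\alpha,\gamma}(x_1,z)$, then $(y_2,z_2)=T_{\beta,\gamma}(y_1,z_1)$, so the image is $(x_2,y_2,z_2)$. Since $(L_1,L_2),(L_1,L_3),(L_2,L_3)$ are Lax pairs of $S,R,T$, the three steps give
\begin{align*}
L_1(x_1;\alpha)L_2(y_1;\beta)&=L_2(y;\beta)L_1(x;\alpha),\\
L_1(x_2;\alpha)L_3(z_1;\gamma)&=L_3(z;\gamma)L_1(x_1;\alpha),\\
L_2(y_2;\beta)L_3(z_2;\gamma)&=L_3(z_1;\gamma)L_2(y_1;\beta).
\end{align*}
Feeding the third relation, then the second, then the first into $L_1(x_2;\alpha)L_2(y_2;\beta)L_3(z_2;\gamma)$ telescopes it down to $L_3(z;\gamma)L_2(y;\beta)L_1(x;\alpha)$.

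Next I would do the same for the right-hand side $S_{\alpha,\beta}^{12}\circ R_{\alpha,\gamma}^{13}\circ T_{\beta,\gamma}^{23}$: with $(\bar y_1,\bar z_1)=T_{\beta,\gamma}(y,z)$, then $(\bar x_1,\bar z_2)=R_{\alpha,\gamma}(x,\bar z_1)$, then $(\bar x_2,\bar y_2)=S_{\alpha,\beta}(\bar x_1,\bar y_1)$, the image is $(\bar x_2,\bar y_2,\bar z_2)$, and the analogous telescoping of $L_1(\bar x_2;\alpha)L_2(\bar y_2;\beta)L_3(\bar z_2;\gamma)$ — now using the Lax relation of $S$ first, then of $R$, then of $T$ — again collapses it to $L_3(z;\gamma)L_2(y;\beta)L_1(x;\alpha)$. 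Hence, as identities in $\zeta$,
$$L_1(x_2;\alpha)L_2(y_2;\beta)L_3(z_2;\gamma)=L_1(\bar x_2;\alpha)L_2(\bar y_2;\beta)L_3(\bar z_2;\gamma),$$
and (\ref{xyz}) yields $x_2=\bar x_2$, $y_2=\bar y_2$, $z_2=\bar z_2$.

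The argument is pure bookkeeping once the indices are set up, so the one thing to watch is the ordering of the three matrix factors: at each substitution the relevant Lax relation has to act on an \emph{adjacent} pair, which means the "updated" variable produced by each elementary map must land on the side of the product where the next relation can be applied. This works out for both chains, but it is the place where a slip would be easy. I would also note that only the forward implication in the definition of a Lax pair is used, so a Lax triple — not necessarily a strong one — suffices.
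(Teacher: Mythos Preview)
Your argument is correct and is essentially identical to the paper's own proof: both compute the two composite maps, telescope the triple product $L_1L_2L_3$ of the outputs down to $L_3(z;\gamma)L_2(y;\beta)L_1(x;\alpha)$ via the three Lax-pair relations, and then invoke the unique factorization hypothesis~(\ref{xyz}). The only differences are notational (your subscripts and bars versus the paper's primes and tildes), and your closing remark that only the forward Lax relation is needed is a valid observation consistent with the statement.
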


\begin{proof}
Let 
\begin{eqnarray*}
S_{\alpha,\beta}^{12}(x,y,z)&=&(x',y',z), \\
R_{\alpha,\gamma}^{13}\circ S_{\alpha,\beta}^{12}(x,y,z)&=&(x'',y',z'), \\ 
T_{\beta,\gamma}^{23}\circ R_{\alpha,\gamma}^{13}\circ S_{\alpha,\beta}^{12}(x,y,z)&=&
(x'',y'',z'')
\end{eqnarray*}
We can represent these maps 
as chains of maps at the down, back, left faces of a cube as
in $(i)$ of Fig.\ref{fig:3d}. All the parallel edges to the $x$ (resp.
$y,z$) axis carry the parameter $\alpha$ (resp. $\beta$, $\gamma$).
From the corresponding Lax pairs we have 
$L_2(y;\beta)L_1(x;\alpha)=L_1(x';\alpha)L_2(y';\beta)$, so
$L_3(z;\gamma)L_2(y;\beta)L_1(x;\alpha)=(L_3(z;\gamma)L_1(x';\alpha))L_2(y';\beta)
=L_1(x'';\alpha)(L_3(z';\gamma)L_2(y';\beta))
=L_1(x'';\alpha)L_2(y''\beta)L_3(z'';\gamma)$. 

We also assume that 
\begin{eqnarray*}
T_{\beta,\gamma}^{23}(x,y,z)&=&(x,\tilde{y},\tilde{z}), \\
R_{\alpha,\gamma}^{13}\circ T_{\beta,\gamma}^{23}(x,y,z)&=&(\tilde{x},\tilde{y},\tilde{\tilde{z}}), \\ 
S_{\alpha,\beta}^{12}\circ R_{\alpha,\gamma}^{13}\circ T_{\beta,\gamma}^{23}(x,y,z)&=&
(\tilde{\tilde{x}},\tilde{\tilde{y}},\tilde{\tilde{z}})
\end{eqnarray*}
These maps are represented at right, front and upper faces of the cube as in 
$(ii)$ of Fig.\ref{fig:3d}. 
Similarly from the Lax pairs we get
$L_3(z;\gamma)L_2(y;\beta)L_1(x;\alpha)=
L_1(\tilde{\tilde{x}};\alpha)L_2(\tilde{\tilde{y}};\beta)L_3(\tilde{\tilde{z}};\gamma)$.
So finally we have that 
$$L_1(x'';\alpha)L_2(y''\beta)L_3(z'';\gamma)=
L_1(\tilde{\tilde{x}};\alpha)L_2(\tilde{\tilde{y}};\beta)L_3(\tilde{\tilde{z}};\gamma)$$
which implies (from the assumptions of the proposition) that $x''= \tilde{\tilde{x}},  \
y''=\tilde{\tilde{y}}, \ z''=\tilde{\tilde{z}}$. i.e. 
$T_{\beta,\gamma}^{23}\circ R_{\alpha,\gamma}^{13}\circ S_{\alpha,\beta}^{12}=
S_{\alpha,\beta}^{12}\circ R_{\alpha,\gamma}^{13}\circ T_{\beta,\gamma}^{23}$.
\end{proof}

\begin{figure}[h]
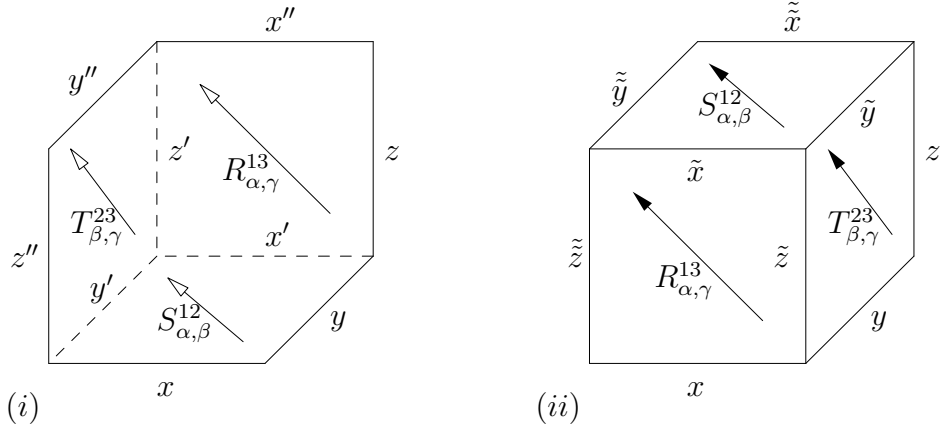

\centertexdraw{ \setunitscale 1.12 \linewd 0.06

\arrowheadsize l:0.1 w:0.05

\move (1. 0)  \linewd 0.001 \lvec(0.5 -0.5) \lpatt()
\lvec(0.5 0.5)\lpatt() \lvec(1.0 1.0) \lpatt()
\move(0.5 0.5) \lpatt()\lvec(-0.5  0.5)\lpatt() \lvec(0.0 1.0)\lpatt(0.05 0.05)

\move(-2.5 0.0) \lpatt(0.05 0.05)
\lvec(-3.0 -0.5) \lpatt()
\lvec(-2.0 -0.5) \lpatt()
\lvec(-1.5 0.0) \lpatt(0.05 0.05)
\lvec(-2.5 -0.0)  \lpatt()
\move(-1.5 0.0) \lvec(-1.5 1.0)\lpatt() \lvec(-2.5 1.0)
\lpatt(0.05 0.05) \lvec(-2.5 0.0) \lpatt()
\move(-3.0 -0.5) \lvec(-3.0 0.5) \lpatt()

\lvec(-2.5 1.0) \lpatt()
\move(0 1) \lvec(1 1) \lpatt()\lvec(1 0)\lpatt()
\move(-0.5 0.5) \lvec(-0.5 -0.5) \lpatt() \lvec(0.5 -0.5)\lpatt()

\move(-2.6 0.1) \avec(-2.9 0.5)\lpatt()
\move(-1.7 0.2) \avec(-2.3 0.8)\lpatt()
\move(-2.1 -0.4) \avec(-2.45 -0.1)\lpatt()
\arrowheadtype t:F
\move(0.9 0.1) \avec(0.6 0.5)\lpatt()
\move(0.3 -0.3) \avec(-0.3 0.3)\lpatt()
\move(0.4 0.6) \avec(0.05 0.9)\lpatt()

\htext (-0.05 -0.65) {$x$}
\htext (0.8 -0.35) {$y$}
\htext (1.05 0.45) {$z$}
\htext (0.35 -0.05) {$\tilde{z}$}
\htext (0.75 0.6) {$\tilde{y}$}
\htext (-0.05 0.35) {$\tilde{x}$}
\htext (-0.6 -0.05) {$\tilde{\tilde{z}}$}
\htext (-0.4 0.7) {$\tilde{\tilde{y}}$}
\htext (0.4 1.05) {$\tilde{\tilde{x}}$}

\htext (-2.5 -0.65) {$x$}
\htext (-1.7 -0.35) {$y$}
\htext (-1.45 0.45) {$z$}
\htext (-2.0 0.05) {$x'$}
\htext (-2.81 -0.22) {$y'$}
\htext (-2.0 1.05) {$x''$}
\htext (-2.45 0.45) {$z'$}
\htext (-2.91 0.75) {$y''$}
\htext (-3.18 -0.05) {$z''$}

\htext (0.6 0.05) {$T^{23}_{\beta,\gamma}$}
\htext (-0.2 -0.2) {$R^{13}_{\alpha,\gamma}$}
\htext (0.0 0.6) {$S_{\alpha,\beta}^{12}$}

\htext (-2.5 -0.4) {$S_{\alpha,\beta}^{12}$}
\htext (-2.2 0.3) {$R^{13}_{\alpha,\gamma}$}
\htext (-2.9 0.05) {$T^{23}_{\beta,\gamma}$}

\htext (-3.2 -0.78)  {$(i)$}
\htext (-0.75 -0.78) {$(ii)$}

}
\caption{Cubic representation of the 3-d compatibility condition}
\label{fig:3d}
\end{figure}

\newpage 

\begin{corollary} \label{corYB}
If $(L_1(x;\alpha),  L_2(x;\alpha), L_3(x;\alpha))$ is a strong Lax triple that satisfies  
Prop. \ref{3product} and 
$L_1(x;\alpha)= L_2(x; \alpha)=L_3(x; \alpha)$, then 
$S_{\alpha,\beta}:\mathcal{X} \times \mathcal{X} \rightarrow \mathcal{X} \times \mathcal{X}$  
is a parametric Yang-Baxter map.  
\end{corollary}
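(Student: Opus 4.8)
The plan is to specialize Proposition \ref{3product} to the degenerate case $L_1=L_2=L_3=:L$ and check that the entwining YB equation collapses to the set-theoretic quantum Yang--Baxter equation. First I would observe that when all three maps $S_{\alpha,\beta},R_{\alpha,\beta},T_{\alpha,\beta}$ share the single Lax pair $(L,L)$, equation (\ref{laxmat}) reads
$$L(u;\alpha)L(v;\beta)=L(y;\beta)L(x;\alpha)$$
for each of the three maps; since the triple is \emph{strong}, this matrix relation is equivalent to $(u,v)$ being the image under the respective map. But all three relations are literally the same equation in $L$, so the three strong Lax pairs force $S_{\alpha,\beta}=R_{\alpha,\beta}=T_{\alpha,\beta}$. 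Hence the entwining YB equation (\ref{eYBeq}) becomes
$$S_{\beta,\gamma}^{23}\circ S_{\alpha,\gamma}^{13}\circ S_{\alpha,\beta}^{12}=
S_{\alpha,\beta}^{12}\circ S_{\alpha,\gamma}^{13}\circ S_{\beta,\gamma}^{23},$$
which is exactly the parametric (set-theoretic) quantum Yang--Baxter equation, i.e.\ the statement that $S$ is a parametric YB map.

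Second, I would verify that the hypothesis of Proposition \ref{3product} — that $L(\hat x;\alpha)L(\hat y;\beta)L(\hat z;\gamma)=L(x;\alpha)L(y;\beta)L(z;\gamma)$ forces $\hat x=x,\hat y=y,\hat z=z$ — is available; but this is precisely the assumption ``satisfies Prop.\ \ref{3product}'' in the statement, so nothing new is needed here. With both ingredients in place, Proposition \ref{3product} applies verbatim and yields (\ref{eYBeq}), which by the identification above is the YB map property for $S_{\alpha,\beta}$.

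The only genuinely delicate point is the reduction $S_{\alpha,\beta}=R_{\alpha,\beta}=T_{\alpha,\beta}$: it uses the \emph{strong} hypothesis in an essential way, since for a mere (non-strong) Lax pair the matrix refactorization need not determine the map uniquely, and distinct maps could share the same Lax pair. I expect this to be the main thing to spell out carefully; once it is established, the rest is a direct substitution into Proposition \ref{3product} and a cosmetic rewriting of indices, with no further computation.
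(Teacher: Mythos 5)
Your proposal is correct and follows essentially the same route as the paper: the strongness of the Lax pairs forces $S_{\alpha,\beta}=R_{\alpha,\beta}=T_{\alpha,\beta}$ once $L_1=L_2=L_3$, whereupon the entwining equation (\ref{eYBeq}) collapses to the Yang--Baxter equation (\ref{YB}). Your explicit remark that the \emph{strong} hypothesis is what makes the identification of the three maps legitimate is a point the paper uses but states only tersely.
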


\begin{proof}
Since the Lax pairs $(L_1(x;\alpha),  L_2(x;\alpha))$, $(L_1(x;\alpha),  L_3(x;\alpha))$, 
$(L_2(x;\alpha),  L_3(x;\alpha))$ of $S_{\alpha,\beta}, \ R_{\alpha,\beta}$ and $T_{\alpha,\beta}$ 
are a strong, then for $L_1=L_2=L_3$,    
$S_{\alpha,\beta}= R_{\alpha,\beta}=T_{\alpha,\beta}$. So equation 
(\ref{eYBeq}) becomes the Yang-Baxter equation 
\begin{equation} \label{YB}
S_{\beta,\gamma}^{23}\circ S_{\alpha,\gamma}^{13}\circ S_{\alpha,\beta}^{12}=
S_{\alpha,\beta}^{12}\circ S_{\alpha,\gamma}^{13}\circ S_{\beta,\gamma}^{23}.
\end{equation}
\end{proof}
In this case the matrix $L_1(x;\alpha)=L_2 (x; \alpha)=L_3(x; \alpha)$ is called a (strong) 
Lax matrix of the Yang-Baxter map 
$S_{\alpha,\beta}$ \cite{ves4}.

\section{Binomial Lax pairs and triples} \label{secbin}
A matrix re-factorization procedure provides a way of constructing strong Lax pairs and  
consequently Lax triples. 
Following the lines of \cite{kp1,ves1} we consider the set $\mathcal{L}^2$ of first degree 
$2 \times 2$ polynomial 
matrices $X-\zeta A$, and we denote by $p_{X}^A$ the polynomial 
$$p_{X}^A(\zeta):=\det(X-\zeta A)=
f_{2}(X;A) \zeta^{2}-f_{1}(X;A) \zeta + f_{0}(X;A).$$ 
For $X=[x_{ij}]$ and $A=[\alpha_{ij}]$, 
$$f_{0}(X;A)= \det X, \   
f_{1}(X;A)=a_{22}x_{11}-a_{21}x_{12}-a_{12}x_{21}+a_{11}x_{22}, \  
f_{2}(X;A)= \det A.$$  
We also consider the matrix functions functions $\Pi_1,\ \Pi_2,$ with
\begin{eqnarray} \label{p1p2}
\Pi _{1}(X,Y)&=& f_{2}(X;A)(YA+BX)-f_{1}(X;A)AB, \\
\Pi _{2}(X,Y)&=& f_{2}(X;A)YX-f_{0}(X;A)AB .
\end{eqnarray}

\begin{proposition} \label{UV} 
Let $A, \ B$ be invertible $2 \times 2$ matrices, such that $AB=BA$ and 
$X, Y \in Mat( 2\times 2)$ with $\det \Pi _{1}(X,Y) \neq 0$ . Then 
\begin{equation}
(U-\zeta A)(V-\zeta B)= (Y- \zeta B)(X-\zeta A), \label{fact}
\end{equation}
and 
$p_{U}^{A}(\zeta)=p_{X}^{A}(\zeta)$ 
(equivalently $p_{V}^{B}(\zeta)=p_{Y}^{B}(\zeta)$), 
iff 
\begin{eqnarray}  \label{U}
U=U(X,Y) &:=& \Pi _{2}(X,Y) \Pi _{1}(X,Y)^{-1}A, \\  
V=V(X,Y) &:=& A^{-1}(YA+BX-U(X,Y)B). \label{V}
\end{eqnarray}
\end{proposition}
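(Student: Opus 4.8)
The plan is to establish the equivalence by treating \eqref{fact} as a system of polynomial identities in $\zeta$ and solving for $U,V$ directly. Expanding both sides of \eqref{fact}, the coefficient of $\zeta^2$ gives $AB=BA$ (our standing hypothesis), the coefficient of $\zeta^0$ gives $UV=YX$, and the coefficient of $\zeta^1$ gives $UB+AV=YA+BX$. The latter immediately yields formula \eqref{V} once $U$ is known: $V=A^{-1}(YA+BX-UB)$, which is exactly what is claimed. So the whole content is to show that the remaining relation $UV=YX$, together with the spectral condition $p_U^A=p_X^A$, pins down $U$ to be $\Pi_2\Pi_1^{-1}A$ and, conversely, that this choice satisfies everything.

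First I would substitute \eqref{V} into $UV=YX$ to get $UA^{-1}(YA+BX-UB)=YX$, i.e. $UA^{-1}(YA+BX)-UA^{-1}UB=YX$. This is quadratic in $U$, so I need the spectral constraint to linearize it. The key observation is that $p_U^A(\zeta)=\det(U-\zeta A)$ and, since $A$ is invertible, $U-\zeta A = -(A^{-1}U - \zeta I)A$ up to a left factor $A$; hence the condition $p_U^A=p_X^A$ is equivalent to saying $A^{-1}U$ and $A^{-1}X$ have the same characteristic polynomial, in particular the same trace and determinant. By the Cayley--Hamilton theorem applied to the $2\times 2$ matrix $A^{-1}U$, we have $(A^{-1}U)^2 = f_1(X;A)f_2(X;A)^{-1}\,(A^{-1}U) - f_0(X;A)f_2(X;A)^{-1}\,I$ (using $f_2=\det A$, and that $\operatorname{tr}(A^{-1}U)$ and $\det(A^{-1}U)$ are read off from $p_X^A$). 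Multiplying through appropriately by $A$ lets me replace the quadratic term $UA^{-1}U$ by a linear expression in $U$, turning $UA^{-1}(YA+BX)-UA^{-1}UB=YX$ into a genuinely linear equation for $U$. Collecting terms and using $AB=BA$ to move factors past each other, this linear equation should read $f_2(X;A)\,U\,A^{-1}(YA+BX) - f_1(X;A)\,U\,A^{-1}AB = f_2(X;A)\,YX - f_0(X;A)\cdot(\text{something})$, which after right-multiplying by $A$ becomes precisely $U\,\Pi_1(X,Y) = \Pi_2(X,Y)\,A$; invertibility of $\Pi_1$ then gives \eqref{U}. The converse direction — checking that the $U,V$ defined by \eqref{U},\eqref{V} do satisfy both $UV=YX$ and $p_U^A=p_X^A$ — is the same computation read backwards, most cleanly done by verifying that $A^{-1}U = A^{-1}\Pi_2\Pi_1^{-1}A$ satisfies the characteristic equation of $A^{-1}X$, e.g. by showing $\operatorname{tr}$ and $\det$ match.

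The main obstacle I anticipate is bookkeeping: keeping the noncommutative products in the right order while repeatedly inserting $A A^{-1}$ and invoking $AB=BA$, and correctly identifying which scalar coefficients $f_0,f_1,f_2$ appear where when I apply Cayley--Hamilton. In particular one must be careful that $f_1(X;A)$ really is both the $\zeta^1$-coefficient of $p_X^A$ and (by the spectral hypothesis) of $p_U^A$, so that the Cayley--Hamilton relation for $A^{-1}U$ can legitimately be written with the \emph{same} scalars $f_i(X;A)$ rather than $f_i(U;A)$ — this is exactly the point where the hypothesis $p_U^A=p_X^A$ enters and is indispensable. A secondary subtlety is that the statement asserts the equivalence ``\eqref{fact} \emph{and} $p_U^A=p_X^A$ $\iff$ \eqref{U}--\eqref{V}'': one should note that \eqref{fact} alone forces $p_U^A p_V^B = p_Y^B p_X^A$ by taking determinants, so the extra spectral condition is what selects the factorization in which the $A$-part of the spectrum stays with $X\mapsto U$; the parenthetical equivalence $p_V^B=p_Y^B$ then follows from this determinant identity together with $\det A,\det B\neq 0$. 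Once the linearization is set up correctly, everything else is routine linear algebra over the $2\times 2$ matrices.
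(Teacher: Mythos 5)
Your proposal follows essentially the same route as the paper's Appendix~A: expand (\ref{fact}) into the system $UV=YX$, $UB+AV=YA+BX$, use the spectral hypothesis $p_U^A=p_X^A$ to invoke Cayley--Hamilton and linearize the resulting quadratic equation for $U$, then invert $\Pi_1$; the converse is the same computation run backwards, with the conjugacy $UA^{-1}=\Pi_1 A^{-1}X\,\Pi_1^{-1}$ delivering the spectral claim. The one correction needed is an ordering slip at the end of your linearization: the linear equation is $UA^{-1}\,\Pi_1(X,Y)=\Pi_2(X,Y)$, i.e. $UA^{-1}\bigl(f_2(YA+BX)-f_1AB\bigr)=f_2YX-f_0AB$, not $U\,\Pi_1=\Pi_2A$ --- the latter would yield $U=\Pi_2A\Pi_1^{-1}$ instead of the asserted $U=\Pi_2\Pi_1^{-1}A$.
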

The proof is given in Appendix A. For $A=B$, the proof has appeared in \cite{kp1}.  
The fact that $p_{U}^{A}(\zeta)=p_{X}^{A}(\zeta)$ and
$p_{V}^{B}(\zeta)=p_{Y}^{B}(\zeta)$, or equivalently 
$$f_{i}(U;A)=f_{i}(X;A), \ f_{i}(V;B)=f_{i}(Y;B), \ \text{for} \ i=0,1,2, $$ 
is crucial and leads to the construction of Lax pairs and symplectic solutions of the entwining YB equation. 

A Poisson structure on $\mathcal{L}^2$ is defined by the Sklyanin bracket \cite{skly2}: 
\begin{equation} 
\{L(\zeta ) \ \overset{\otimes }{,} \ L(\eta)\}=[\frac{r}{\zeta
-\eta},L(\zeta )\otimes L(\eta)],  \label{sklyanin}
\end{equation} 
where $L(\zeta)=X-\zeta A$ and $r$ denotes the permutation matrix,
 $r(x\otimes y) = y\otimes x$. 
There are six Casimir functions on $\mathcal{L}^2$ which are  
the elements $a_{ij}$ of the matrix $A$ and 
the coefficients of the polynomial $p_{X}^A(\zeta)$, 
$f_{0}(X;A), \ f_{1}(X;A)$.
For any constant matrix $A$ we denote by $\mathcal{L}^2_{A}$ the level set
$\mathcal{L}^2_{A}=\{X-\zeta A \ / \ X \in Mat(2\times2) \}$.

By a direct computation we can prove that the map $(X,Y)\mapsto (U,V)$, for $U$, $V$,  
(\ref{U}) and (\ref{V}) respectively, is a Poisson 
map $\mathcal{L}^2\times \mathcal{L}^2
\rightarrow  \mathcal{L}^2\times \mathcal{L}^2$ (we extend the Sklyanin bracket to the cartesian 
product  $\mathcal{L}^2 \times \mathcal{L}^2$ in the natural way).

Let $A_1,A_2,A_3$ be three invertible matrices that 
commute with each other. 
We restrict to a level set on $\mathcal{L}^2_{A_1}$, of the Casimir functions $f_0$ and $f_1$,  
by solving the system $f_{0}(X;{A_1})=\alpha_{0}$, $f_{1}(X;{A_1})=\alpha_{1}$,  
with respect to two elements of $X$. We denote the two remaining elements of $X$ 
by $x_1$ and $x_2$. 
In this way we define the matrix $L_1'(x_{1},x_{2};\bar{\alpha})$, with 
$\bar{\alpha}=(\alpha_0,\alpha_1)$, such that  
\begin{equation*}
f_{0}(L_1'(x_{1},x_{2};\bar{\alpha});A_1) = \alpha_{0} \ \text{and} \   
f_{1}(L_1'(x_{1},x_{2};\bar{\alpha});A_1) = \alpha_{1}  
\end{equation*}
and the two dimensional symplectic leaves of  
$\mathcal{L}^2_{A_1}$ 
$$\Sigma_{A_1}(\bar{\alpha})=\{L'(x_{1},x_{2};\bar{\alpha}) -\zeta A_1 \ 
/ \ x_1,x_2 \in I\subset \mathbb{C} \},$$ 
with respect to the reduced Sklyanin structure (\ref{sklyanin}).  
In a similar way we define the matrices $L_2'(x_{1},x_{2};\bar{\alpha})$ and 
$L_3'(x_{1},x_{2};\bar{\alpha})$ 
from the restriction on the level sets of the Casimir functions on 
$\mathcal{L}^2_{A_2}$ and $\mathcal{L}^2_{A_3}$ respectively i.e. 
\begin{eqnarray*}
f_{0}(L_2'(x_{1},x_{2};\bar{\alpha});A_2) = \alpha_{0},  \   
f_{1}(L_2'(x_{1},x_{2};\bar{\alpha});A_2) = \alpha_{1},  \\
f_{0}(L_3'(x_{1},x_{2};\bar{\alpha});A_3) = \alpha_{0},  \   
f_{1}(L_3'(x_{1},x_{2};\bar{\alpha});A_3) = \alpha_{1} \ 
\end{eqnarray*}
and the corresponding symplectic leaves $\Sigma_{A_2}(\bar{\alpha})$ and $\Sigma_{A_3}(\bar{\alpha})$. 
\begin{theorem} \label{lpun}
The equations 
\begin{equation} \label{uniq} 
(L_i'(u^{ij}_{1},u^{ij}_{2};\bar{\alpha})-\zeta A_i) (L_j'(v^{ij}_{1},v^{ij}_{2};\bar{\beta})-\zeta A_j)  
=
(L_j'(y_{1},y_{2};\bar{\beta})-\zeta A_j) (L_i'(x_{1},x_{2};\bar{\alpha})-\zeta A_i),  
\end{equation}
for $i,j=1,2,3$, are uniquely solvable with respect to $u^{ij}_1,\ u^{ij}_2, \ v^{ij}_1$ and $v^{ij}_2$. 
The parametric maps $(S_{\bar{\alpha},\bar{\beta}},R_{\bar{\alpha},\bar{\beta}},T_{\bar{\alpha},\bar{\beta}})$, 
with 
\begin{eqnarray*} 
S_{\bar{\alpha},\bar{\beta}}:((x_1,x_2),(y_1,y_2)) &\mapsto & ((u^{12}_1,u^{12}_2),(v^{12}_1,v^{12}_2)), \\ 
R_{\bar{\alpha},\bar{\beta}}:((x_1,x_2),(y_1,y_2)) &\mapsto & ((u^{13}_1,u^{13}_2),(v^{13}_1,v^{13}_2)), \\ 
T_{\bar{\alpha},\bar{\beta}}:((x_1,x_2),(y_1,y_2)) &\mapsto & ((u^{23}_1,u^{23}_2),(v^{23}_1,v^{23}_2)) 
\end{eqnarray*}
are symplectic and admit the strong Lax triple 
$(L_1(x_1,x_2,\bar{\alpha}),L_2(x_1,x_2,\bar{\alpha}),L_3(x_1,x_2,\bar{\alpha}))$, 
where $$L_i(x_1,x_2,\bar{\alpha})= L_i'(x_1,x_2,\bar{\alpha})-\zeta A_i, \ \text{for} \ i=1,2,3.$$
Moreover they satisfy the entwining YB equation 
$T_{\beta,\gamma}^{23} R_{\alpha,\gamma}^{13} S_{\alpha,\beta}^{12}=
S_{\alpha,\beta}^{12} R_{\alpha,\gamma}^{13} T_{\beta,\gamma}^{23}$. 
\end{theorem}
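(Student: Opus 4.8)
The plan is to obtain the unique solvability of (\ref{uniq}) and the strong Lax triple property directly from Proposition \ref{UV}, and the entwining YB equation from Proposition \ref{3product}. For the first, fix a pair $(i,j)$ and specialize Proposition \ref{UV} to $A=A_i$, $B=A_j$ (which commute and are invertible by hypothesis), $X=L_i'(x_1,x_2;\bar\alpha)$ and $Y=L_j'(y_1,y_2;\bar\beta)$: on the open locus where $\det\Pi_1(X,Y)\neq 0$ the refactorization (\ref{uniq}) admits the solution $U=U(X,Y)$, $V=V(X,Y)$ of (\ref{U})--(\ref{V}), and $p_U^{A_i}=p_X^{A_i}$, $p_V^{A_j}=p_Y^{A_j}$. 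The first identity gives $f_0(U;A_i)=\alpha_0$ and $f_1(U;A_i)=\alpha_1$, so $U-\zeta A_i$ lies on the leaf $\Sigma_{A_i}(\bar\alpha)$ and hence $U=L_i'(u^{ij}_1,u^{ij}_2;\bar\alpha)$ for uniquely determined $u^{ij}_1,u^{ij}_2$; similarly $V=L_j'(v^{ij}_1,v^{ij}_2;\bar\beta)$. Since $L_i'(u^{ij};\bar\alpha)$ and $L_i'(x;\bar\alpha)$ have, by construction, the same Casimir values, the spectral condition in Proposition \ref{UV} is automatic, so among the solutions of (\ref{uniq}) lying on the prescribed leaves the pair $(u^{ij},v^{ij})$ is the unique one. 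This defines $S,R,T$ and shows that each pair $(L_i,L_j)$ is a \emph{strong} Lax pair for the corresponding map, i.e. $(L_1,L_2,L_3)$ is a strong Lax triple. Symplecticity is then immediate from the Poisson-map statement recorded just before the theorem: the map $(X,Y)\mapsto(U,V)$ is Poisson on $\mathcal{L}^2\times\mathcal{L}^2$ for the extended Sklyanin bracket whose Casimirs include $f_0,f_1$; since these are preserved, the map sends $\Sigma_{A_i}(\bar\alpha)\times\Sigma_{A_j}(\bar\beta)$ into itself, and a Poisson map preserving symplectic leaves restricts to a symplectic map on them \cite{skly3}.

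It then remains to establish the entwining YB equation, for which, by Proposition \ref{3product}, it suffices to verify the unique-factorization hypothesis (\ref{xyz}) for the triple $L_k=L_k'(\cdot;\bar\alpha)-\zeta A_k$: that
\[
(L_1'(\hat x)-\zeta A_1)(L_2'(\hat y)-\zeta A_2)(L_3'(\hat z)-\zeta A_3)=(L_1'(x)-\zeta A_1)(L_2'(y)-\zeta A_2)(L_3'(z)-\zeta A_3),
\]
as an identity of $\zeta$-polynomials with $\hat x,x$ on $\Sigma_{A_1}(\bar\alpha)$ and so on, forces $\hat x=x$, $\hat y=y$, $\hat z=z$. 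I would argue by ``peeling''. Writing $q_k(\zeta):=\det(L_k'(\,\cdot\,)-\zeta A_k)$, which depends only on the leaf, both cubic matrix polynomials above have determinant $q_1q_2q_3$ and leading term $-A_1A_2A_3$. At a root $\zeta_0$ of $q_1$ the factor $L_1'(\hat x)-\zeta_0 A_1$ has rank $1$ and, for generic data ($q_1,q_2,q_3$ pairwise coprime with simple roots, $A_k$ generic), the other two factors are invertible at $\zeta_0$; hence the column space of the product at $\zeta_0$ equals that of $L_1'(\hat x)-\zeta_0 A_1$, and equally that of $L_1'(x)-\zeta_0 A_1$. Running over the two roots of $q_1$ recovers the two generalized eigenlines of the pencil $L_1'(\hat x)-\zeta A_1$, which on the $2$-dimensional leaf determine it, so $L_1'(\hat x)=L_1'(x)$, i.e. $\hat x=x$. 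Symmetrically, evaluating at the roots of $q_3$ and comparing \emph{row} spaces gives $\hat z=z$; cancelling the now-equal (generically invertible) outer factors leaves $L_2'(\hat y)-\zeta A_2=L_2'(y)-\zeta A_2$, so $\hat y=y$. With (\ref{xyz}) in hand, Proposition \ref{3product} yields the entwining YB equation.

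The main obstacle is precisely this last verification of (\ref{xyz}): it is where the matrix-pencil structure is genuinely exploited and where one must impose and track the genericity (coprimality of the $q_k$, simplicity of roots, rank of the degenerate factors) that legitimizes the rank-$1$ evaluations and the cancellations. By contrast, the solvability, strong-Lax-triple and symplecticity claims are routine once Proposition \ref{UV} and the Poisson-map statement are in place.
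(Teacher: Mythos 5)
Your treatment of the unique solvability, the strong Lax triple property, and symplecticity coincides with the paper's: both specialize Proposition \ref{UV} to $X=L_i'$, $Y=L_j'$, $A=A_i$, $B=A_j$, observe that the spectral condition $p_U^{A_i}=p_X^{A_i}$ is automatic on the prescribed symplectic leaves (so the formulas (\ref{U})--(\ref{V}) give the unique leaf-constrained solution), and invoke the Poisson property of $(X,Y)\mapsto(U,V)$ together with preservation of the Casimirs to get symplecticity of the reduced maps. Where you genuinely diverge is the key step, the unique factorization (\ref{xyz}) of the triple product needed for Proposition \ref{3product}. The paper (Appendix B) argues algebraically: it expands (\ref{pren}) in powers of $\zeta$ to get the system $X'Y'Z'=K$, $X'Y'A_3+X'A_2Z'+A_1Y'Z'=L$, $X'A_2A_3+A_1Y'A_3+A_1A_2Z'=M$, eliminates $Y',Z'$ to obtain a cubic matrix equation in $X'A_1^{-1}$, and uses the Cayley--Hamilton identity $p_{X'}^{A_1}(X'A_1^{-1})=0$ (valid because $X'$ and $X$ share Casimir values) to reduce it to a linear equation whose inhomogeneous term $Q$ factors through $p_X^{A_1}(XA_1^{-1})=0$; this yields $X'=X$ provided the coefficient matrix $\alpha_2^2L-\alpha_2\alpha_1M+(\alpha_1^2-\alpha_2\alpha_0)A_1A_2A_3$ is invertible, then $Z'=Z$ symmetrically and $Y'=Y$ by cancellation. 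Your argument is instead spectral: evaluating at the roots of $q_1$ and using that the images there are the left eigenlines of $XA_1^{-1}$, whose eigenvalues are already fixed by the leaf, you recover $L_1'(\hat x)$ from the product; this is correct and is essentially the divisor-theoretic uniqueness of matrix-pencil factorization. The trade-off: your route is more conceptual and explains \emph{why} the outer factors are rigid, but it requires explicit genericity (coprimality of the $q_k$, simple roots, rank-one degeneration), whereas the paper's Cayley--Hamilton manipulation stays entirely within polynomial identities in the coefficients and hides its genericity in a single invertibility assumption. Both are generic arguments, so neither is strictly stronger; your version would need the genericity bookkeeping you flag to be carried out, while the paper's needs the invertibility of its coefficient matrix to be acknowledged.
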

The proof of this theorem is given in appendix B. 

As it is remarked in \cite{kp1}, there are cases where limits of quadrirational YB maps give rise to new 
degenerate maps. A similar procedure can be applied here as well. For example we can 
consider commuting invertible matrices $A_i=A_i(\varepsilon)$, depending on a 
parameter $\varepsilon$, such that $\lim_{\varepsilon \rightarrow 0}detA_i(\varepsilon)=0$, 
for some $i \in \{1,2,3\}$, and construct the corresponding maps 
$(S_{\bar{\alpha},\bar{\beta}}(\varepsilon),R_{\bar{\alpha},\bar{\beta}}(\varepsilon),
T_{\bar{\alpha},\bar{\beta}}(\varepsilon))$ 
that we described. A new solution of the entwining YB equation can be derived by taking 
the limit (if it exists) of $(S_{\bar{\alpha},\bar{\beta}}(\varepsilon),R_{\bar{\alpha},\bar{\beta}}(\varepsilon),
T_{\bar{\alpha},\bar{\beta}}(\varepsilon))$ for $\varepsilon \rightarrow 0$. 

\section{Integrable rational maps on $\mathbb{C}^2 \times \mathbb{C}^2$} \label{secent}
We give here an example of two solutions of the entwining YB equation from a pair of 
commuting matrices. 

We consider $X-\zeta A_i \in \mathcal{L}^2_{A_i}$, for $i=1,2,$ 
with   
\begin{equation*} 
X= 
\left(
\begin{array}{ll}
 x_1 & x_2 \\
 x_3 & x_4
\end{array} 
\right) , \  
A_1= 
\left(
\begin{array}{ll}
 1 & 0 \\
 0 & 1
\end{array} 
\right) 
\ \text{and}  \ 
A_2= 
\left(
\begin{array}{ll}
 1 & 0 \\
 0 & \varepsilon
\end{array}
\right) , \ \varepsilon \neq 0. 
\end{equation*}
First we will construct symplectic rational maps on symplectic leaves of 
$\mathcal{L}^2_{A_1} \times \mathcal{L}^2_{A_1}$,  
$\mathcal{L}^2_{A_2} \times \mathcal{L}^2_{A_2}$, 
$\mathcal{L}^2_{A_1} \times \mathcal{L}^2_{A_2}$ and $\mathcal{L}^2_{A_2} \times \mathcal{L}^2_{A_1}$ 
respectively and their corresponding Lax pairs. Next by considering two cases 
$A_3=A_1$ and $A_3=A_2$ we derive two solutions of the entwining YB equation. All these 
maps are integrable. 

\subsection{ A Yang-Baxter map on $\Sigma_{A_1}(\alpha,1) \times \Sigma_{A_1}(\beta,1)$}
The Casimir functions on $\mathcal{L}^2_{A_1}$ are  
$$f_0(X;A_1)=x_1x_4-x_2x_3, \ f_1(X;A_1)=x_1+x_4.$$
We set $f_0(X;A_1)=\alpha$, $f_1(X;A_1)=1$ and solve with 
respect to $x_3$ and $x_4$ to get the matrix 
\begin{equation} 
\label{L1}
L_1'(x_1,x_2;\alpha)=
\left(
\begin{array}{cc}
 {x_1} & {x_2} \\
 -\frac{\alpha+({x_1}-1) {x_1}}{{x_2}} & 1-{x_1}
\end{array}
\right)
\end{equation}
and the Lax matrix 
\begin{equation} \label{laxL1}
L_1(x_1,x_2;\alpha)=L'(x_1,x_2;\alpha)-\zeta A_1.
\end{equation} 
As expected from theorem \ref{lpun} and corollary \ref{corYB}, the equation 
$$L_1(u'_1,u'_2;\alpha)L_1(v'_1,v'_2;\beta)=L_1(y_1,y_2;\beta)L_1(x_1,x_2;\alpha)$$ 
is uniquely solvable with respect to $u'_1,u'_2,v'_1,v'_2$ and implies the parametric 
Yang-Baxter map
\begin{equation} \label{YB1}
R_{\alpha,\beta}((x_1,x_2),(y_1,y_2))=((u'_1,u'_2),(v'_1,v'_2)),
\end{equation}
where 
\begin{eqnarray*}
u'_1 &=&  y_{1}+\frac{x_{2}y_{2}}{N}(\alpha-\beta)  (x_{1}+y_{1}-1)  ,\\
u'_2 &=& \frac{y_{2}}{N} \left(\alpha (x_{2}+y_{2})^2+(x_{2} (y_{1}-1)-x_{1} y_{2}) (x_{2} y_{1}-x_{1}
   y_{2}+y_{2})\right), \\
v'_1 &=& x_{1}-\frac{x_{2}y_{2}}{N}(\alpha-\beta)  (x_{1}+y_{1}-1) , \\ 
v'_2 &=& \frac{x_{2}}{N} \left(\beta (x_{2}+y_{2})^2+(x_{2} (y_{1}-1)-x_{1} y_{2}) (x_{2} y_{1}-x_{1}
   y_{2}+y_{2})\right), \\ 
   N &=& \alpha {y_2}^2 + \beta {x_2}^2+(\alpha+\beta -1) {y_2} {x_2}+(x_2 y_1 - x_1 y_2)
   (x_2 y_1 - x_1 y_2+y_2-x_2).
\end{eqnarray*}
We can verify that $u'_1=U'_{11}$, $u'_2=U'_{12}, \ v'_1=V'_{11}$ and $v'_2=V'_{12}$, where $U'_{ij}$ and $V'_{ij}$ 
are the $ij$ elements of the matrices 
$$U':=U(L'(x_{1},x_{2};\alpha),L'(y_1,y_2,\beta)), \  V':=V(L'(x_{1},x_{2};\alpha),L'(y_1,y_2,\beta))$$ 
defined by (\ref{U}) and (\ref{V}) respectively. 
 
\subsection{ A Yang-Baxter map on $\Sigma_{A_2}(\alpha,1) \times \Sigma_{A_2}(\beta,1)$}
The construction of the following Yang-Baxter map appears in \cite{kp1}. 
The Casimir functions on $\mathcal{L}^2_{A_2}$ are 
$$f_0(X;A_2)=x_1 x_4-x_2 x_3, \ f_1(X;A_2)=\varepsilon x_1+x_4.$$
We set here as well $f_0(X;A_2)=\alpha$, $f_1(X;A_2)=1$ and solve 
with respect to $x_1,\ x_4$. By performing the change of 
variables: $x_2 \mapsto x_1$ and $x_3 \mapsto x_2$ we derive 
the matrix 
\begin{equation} \label{Me}
M_{\varepsilon}'(x_1,x_2;\alpha)=\left(
\begin{array}{cc}
 \frac{1-(1-4 \varepsilon (\alpha+{x_1} {x_2}))^{1/2}}{2 \varepsilon} & {x_1} \\
 {x_2} & \frac{1}{2} (1-4 \varepsilon (\alpha+{x_1} {x_2}))^{1/2}+\frac{1}{2}
 \end{array} \right)
\end{equation}
and the corresponding Lax matrix $M_{\varepsilon}(x_1,x_2;\alpha)=M_{\varepsilon}'(x_1,x_2;\alpha)-\zeta A_2$.
In this way we obtain the non-degenerate Yang-Baxter map 
\begin{equation} \label{YB2}
R_{\alpha,\beta}^{\varepsilon}((x_1,x_2),(y_1,y_2))=((\mathbf{u}_1,\mathbf{u}_2),(\mathbf{v}_1,\mathbf{v}_2)),
\end{equation}
with  $\mathbf{u}_1=\mathbf{U}_{12}$, $\mathbf{u}_2=\mathbf{U}_{21}, 
\ \mathbf{v}_1=\mathbf{V}_{12}$, $\mathbf{v}_2=\mathbf{V}_{21}$ the corresponding elements of 
$\mathbf{U}=U(M_{\varepsilon}'(x_1,x_2;\alpha),M_{\varepsilon}'(y_1,y_2;\beta))$,    
$\mathbf{V}=V(M_{\varepsilon}'(x_1,x_2;\alpha),M_{\varepsilon}'(y_1,y_2;\beta))$. As in \cite{kp1} we 
take the limit of (\ref{YB2}), for $\varepsilon \rightarrow 0$, in order to derive the degenerate Yang-Baxter map
\begin{equation} \label{YB3}
\bar{R}_{\alpha,\beta}((x_1,x_2),(y_1,y_2))=
\lim_{\varepsilon \rightarrow 0}R_{\alpha,\beta}^{\varepsilon}((x_1,x_2),(y_1,y_2))
=((\bar{u}_1,\bar{u}_2),(\bar{v}_1,\bar{v}_2))
\end{equation}
with 
\begin{eqnarray*}
\bar{u}_{1} =  y_{1}-\frac{(\alpha-\beta)x_{1}}
{1+x_{1}y_{2}}, \ 
\bar{u}_{2}= y_{2}, \ 
\bar{v}_{1} = x_{1}, \  
\bar{v}_{2} = x_{2}+\frac{(\alpha-\beta)y_{2}}{1+x_{1}y_{2}},
\end{eqnarray*}
and strong Lax matrix
\begin{equation} \label{LaxL2}
L_2(x_1,x_2;\alpha)=\lim_{\varepsilon \rightarrow 0} M_{\varepsilon}(x_1,x_2;\alpha)=
\left(
\begin{array}{cc}
 {x_1} x_2+\alpha-\zeta & {x_1} \\
 x_2 & 1
\end{array}
\right).
\end{equation} 
\subsection{The maps on $\Sigma_{A_1}(\alpha,1) \times \Sigma_{A_2}(\beta,1)$ 
and on $\Sigma_{A_2}(\alpha,1) \times \Sigma_{A_1}(\beta,1)$} 
The equation 
$$L_1(u_1,u_2;\alpha)L_2(v_1,v_2;\beta)=L_2(y_1,y_2;\beta)L_1(x_1,x_2;\alpha)$$
with $L_1, \ L_2$ (\ref{laxL1}) and (\ref{LaxL2}) respectively, yields the unique solution 
\begin{eqnarray*}
u_1 &=& {y_2} {y_1}-\frac{{x_1} {y_1}}{{x_2}}+\frac{(\alpha+\beta ({x_2}
   {y_2}-{x_1}))(x_2+y_1)}{x_2( {x_2} {y_2}-{x_1}-\beta +1)}, \\  
u_2 &=& -\frac{\alpha ({x_2}+{y_1})^2+(\beta {x_2}+{y_1} {y_2} {x_2}-{x_1} {y_1}+{y_1}) ({x_2} (\beta+{y_1}
   {y_2}-1)-{x_1} {y_1})}{{x_2} ({x_2} {y_2}-{x_1}-\beta+1)}, \\ 
v_1 &=& x_2+y_1, \\ 
 v_2 &=& \frac{{x_1}}{{x_2}}-\frac{\alpha+\beta ({x_2} {y_2}-{x_1})}{{x_2} ({x_2} {y_2}-{x_1}-\beta +1)}.
\end{eqnarray*}
This solution can be obtained by the limit for $\varepsilon \rightarrow 0$ of the $U_{11}$, $U_{12}$, $V_{12}$ 
and $V_{21}$ elements of the matrices 
$$U:=U(L_1'(x_{1},x_{2};\alpha),M'_{\varepsilon}(y_1,y_2,\beta)), \ 
V:=V(L_1'(x_{1},x_{2};\alpha),M'_{\varepsilon}(y_1,y_2,\beta)),$$ 
defined by (\ref{U}) and (\ref{V}) with $L_1', \ M'_{\varepsilon}$ (\ref{L1}) and (\ref{Me}) respectively. 
So the map 
\begin{equation} \label{exmap}
S_{\alpha,\beta}:((x_1,x_2),(y_1,y_2)) \mapsto ((u_1,u_2),(v_1,v_2))
\end{equation}
admits the 
strong Lax pair $(L_1(x_1,x_2;\alpha),L_2(x_1,x_2;\alpha))$. 

Furthermore the twisted equation 
$$L_2(\tilde{u}_1,\tilde{u}_2;\alpha)L_1(\tilde{v}_1,\tilde{v}_2;\beta)=L_1(y_1,y_2;\beta)L_2(x_1,x_2;\alpha)$$ 
implies 
\begin{eqnarray*}
\tilde{u}_1 &=& \frac{(\alpha {x_1}-{y_1} {x_1}-{y_2}) {y_2}}{\beta {x_1}+({y_1}-1) {y_1} {x_1}+(\alpha+{y_1}-1) {y_2}},\\  
\tilde{u}_2 &=& {x_2}-\frac{\beta+({y_1}-1) {y_1}}{{y_2}}, \\ 
\tilde{v}_1 &=& \frac{\beta (\alpha {x_1}-{y_2})+({y_1} {x_1}-{x_1}+{y_2}) (\alpha {y_1}+{x_2} {y_2})+{x_1} {x_2} \left({x_1}
   \left({y_1}^2-{y_1}+\beta\right)+{y_1} {y_2}\right)}{\beta {x_1}+({y_1}-1) {y_1} {x_1}+(\alpha+{y_1}-1) {y_2}}, \\ 
\tilde{v}_2 &=& {x_1}-\frac{(\alpha {x_1}-{y_1} {x_1}-{y_2}) {y_2}}{\beta {x_1}+({y_1}-1) {y_1} {x_1}+(\alpha+{y_1}-1) {y_2}}
\end{eqnarray*}
and the map $T_{\alpha,\beta}$, 
\begin{equation}\label{extw}
{T}_{\alpha,\beta}:((x_1,x_2),(y_1,y_2)) 
\mapsto ((\tilde{u}_1,\tilde{u}_2),(\tilde{v}_1,\tilde{v}_2)), 
\end{equation}
with strong Lax pair $(L_2(x_1,x_2;\alpha),L_1(x_1,x_2;\alpha))$. 

\subsection{Solutions of the entwining YB equation and integrability}
We consider two cases. First we choose a third matrix $A_3=A_1$ and the corresponding 
Lax matrix $L_3(x_1,x_2;\alpha)=L_1(x_1,x_2;\alpha)$. Theorem \ref{lpun} implies that 
the maps $(S_{\alpha,\beta},R_{\alpha,\beta},T_{\alpha,\beta})$ (\ref{exmap},  \ref{YB1} and \ref{extw}  
respectively) admits the strong Lax triple 
$(L_1(x_1,x_2;\alpha),L_2(x_1,x_2;\alpha),L_1(x_1,x_2;\alpha))$ and satisfy the entwining Yang-Baxter equation  
$$T_{\beta,\gamma}^{23}\circ R_{\alpha,\gamma}^{13}\circ S_{\alpha,\beta}^{12}=
S_{\alpha,\beta}^{12}\circ R_{\alpha,\gamma}^{13} \circ T_{\beta,\gamma}^{23}.$$

Now if choose $A_3=A_2, \ L_3(x_1,x_2;\alpha)=L_2(x_1,x_2;\alpha)$, we derive the strong Lax triple 
$(L_1(x_1,x_2;\alpha),L_2(x_1,x_2;\alpha),L_2(x_1,x_2;\alpha))$ of the maps 
$(S_{\alpha,\beta},S_{\alpha,\beta},\bar{R}_{\alpha,\beta})$, from (\ref{exmap}) and (\ref{YB3}), that satisfy the 
entwining YB equation 
$$\bar{R}_{\beta,\gamma}^{23}\circ S_{\alpha,\gamma}^{13}\circ S_{\alpha,\beta}^{12}=
S_{\alpha,\beta}^{12}\circ S_{\alpha,\gamma}^{13} \circ \bar{R}_{\beta,\gamma}^{23}.$$

The map $S_{\alpha,\beta}$ (\ref{exmap}) is symplectic with respect to the reduced Sklyanin bracket 
on $\Sigma_{A_1}(\alpha,1) \times \Sigma_{A_2}(\beta,1)$:
\begin{equation} \label{pstr}
\{ x_1,x_2 \}=-x_2, \  \{y_1,y_2 \}=1, \ \{x_i,y_j \}=0, \ i,j=1,2 \ \ (x_2 \neq 0).
\end{equation}
For the 1-periodic `staircase' initial value problem  
the trace of the corresponding
monodromy matrix, $M_1(\mathbf{x},\mathbf{y})=L_2(y_1,y_2;\beta)L_1(x_1,x_2;\alpha)$, 
gives the two first integrals:
\begin{eqnarray*}
J^s_1(x_1,x_2,y_1,y_2) &=& x_2 y_2+x_1( y_1 y_2+\beta-1)-
\frac{y_1}{x_2}(x_1^2-x_1+\alpha), \\ 
J^s_2(x_1,x_2,y_1,y_2) &=& x_1+y_1 y_2, 
\end{eqnarray*}
which are in involution with respect to the Poisson bracket (\ref{pstr}). 

Similarly 
$\bar{R}_{\alpha,\beta}$, $T_{\alpha,\beta}$ 
are integrable and $R_{\alpha,\beta}$ is an 
involution i.e. $R_{\alpha,\beta}\circ R_{\alpha,\beta}=Id$. 
The reduced Sklyanin bracket on the corresponding symplectic leaves is 
given by the brackets of the coordinates 
\begin{eqnarray} 
&&\{ x_1,x_2 \}=-x_2, \  \{y_1,y_2 \}=-y_2, \ ~ \{x_i,y_j \}=0, \ \text{on} \ 
 \Sigma_{A_1}(\alpha,1) \times \Sigma_{A_1}(\beta,1), \label{brR}\\ 
&& \{ x_1,x_2 \}=1, \ \ \ \ ~  \{y_1,y_2 \}=1, \ \ ~ ~ \ \  \{x_i,y_j \}=0, \ \text{on} \ 
 \Sigma_{A_2}(\alpha,1) \times \Sigma_{A_2}(\beta,1), \label{brRR}\\
&&\{ x_1,x_2 \}=1, \ \ \ \ ~ \{y_1,y_2 \}=-y_2,  \ ~ \{x_i,y_j \}=0, \ \text{on} \ 
 \Sigma_{A_2}(\alpha,1) \times \Sigma_{A_1}(\beta,1) \label{brT}.  
\end{eqnarray}
The corresponding integrals  
of the YB map $\bar{R}_{\alpha,\beta}$ are    
$$
J^{\bar{R}}_1 = \alpha  y_1 y_2 + b x_1 x_2+(x_2 y_1+1) (x_1 y_2+1),  \ 
J^{\bar{R}}_2 = x_1 x_2+y_1 y_2,  
$$ 
and of the map $T_{\alpha,\beta}$ are   
$$J^T_1 = x_2 y_2+y_1( x_1 x_2 + \alpha -1) - 
\frac{x_1}{y_2}(y_1^2-  y_1+ \beta),  \ 
J^T_2 = y_1+x_1 x_2.$$ 
The integrals of each pair are in involution with respect to the 
Poisson brackets (\ref{brRR}) and (\ref{brT}) respectively.

\section{Perspectives} \label{persp}
We presented the construction of entwining Yang-Baxter maps following the procedure of using Lax matrices 
as symplectic leaves of the Sklyanin bracket of polymonial matrices. These maps give rise to, in principle, asymmetric 
discrete integrable systems on the 2-dimensional integer lattice. We believe an extensive study as well as their relation to scalar asymmetric integrable systems on quad-graphs deserves more attention. 

\section{Acknowledgments}
The first author acknowledges partial support from the State Scholarships Foundation of
Greece.

\section{Appendix A}
The proof of proposition \ref{UV}.

\begin{proof}
From equation (\ref{fact}) we derive the system :
\begin{equation} \label{syst}
UV=YX, \ \ UB+AV=YA+BX
\end{equation}
which implies
\begin{equation} \label{tetr}
(UA^{-1})^2 AB=UA^{-1}(YA+BX)-YX.
\end{equation}
Since $p_{U}^{A}(\zeta)=p_{X}^{A}(\zeta)$, then $f_{i}(U;A)=f_{i}(X;A)$
so from the Cayley-Hamilton theorem we have that:
\begin{equation} \label{cal}
f_{2}(X;A)(UA^{-1})^{2}-f_{1}(X;A)U A^{-1} + f_{0}(X;A)I=0.
\end{equation}
By solving (\ref{tetr}) and(\ref{cal}) with respect to $U$ we get that:
$$U=(f_{2}(X;A)YX-f_{0}(X;A)AB)
(f_{2}(X;A)(YA+BX)-f_{1}(X;A)AB))^{-1}A$$
and from (\ref{syst}) that $V = A^{-1}(YA+BX-U(X,Y)B)$.

From the other hand let us assume that $U$ and $V$ are given by (\ref{U}) and  (\ref{V}). 
By performing some calculations we have that
\begin{eqnarray*}
\Pi _{2}(X,Y) &=& \Pi _{2}(X,Y)+\Pi _{1}(X,Y)A^{-1}X-\Pi _{1}(X,Y)A^{-1}X  \\
&=& \Pi _{1}(X,Y)A^{-1}X - BA(f_{2}(X;A)
(A^{-1}X)^{2}-f_{1}(X;A) A^{-1}X + f_{0}(X;A)I) \\
&=& \Pi _{1}(X,Y)A^{-1}X-BAp_{X}^{A}(A^{-1}X)
\end{eqnarray*}
Cayley-Hamilton theorem states that $p_{X}^{A}(A^{-1}X)=0$ so
$\Pi _{2}(X,Y)=\Pi _{1}(X,Y)A^{-1}X$
and from (\ref{U}) we get $UA^{-1}=\Pi_1A^{-1}X \Pi_1^{-1}$ which means that
$f_{i}(U;A)=f_{i}(X;A)$ or $p_{U}^{A}(\zeta)=p_{X}^{A}(\zeta)$ and
$p_{V}^{B}(\zeta)=p_{Y}^{B}(\zeta)$.
Also from (\ref{U}) and (\ref{V}) we derive that $UB+AV=YA+BX$ and
$$UA^{-1}(f_{2}(X;A)(UB+AV)-f_{1}(X;A)AB)=
f_{2}(X;A)YX-f_{0}(X;A)AB$$
or
$$(f_{2}(X;A)(UA^{-1})^2-f_{1}(X;A)UA^{-1}+f_{0}(X;A)I)AB=
f_2(X;A)(YX-UV).$$
Since $f_i(X;A)=f_i(U;A)$ 
from Cayley-Hamilton theorem we get $UV=YX$.
\end{proof}


\section{Appendix B} 
The proof of theorem \ref{lpun}. 
\begin{proof}
The proof of the existence and uniqueness of the solutions of (\ref{uniq}) 
is a consequence of the 
construction of the matrices $L_i'(x_1,x_2,\bar{\alpha})$ in addition 
with Prop. \ref{UV},  
for $X=L_i'(x_1,x_2,\bar{\alpha})$, $Y =L_j'(x_1,x_2,\bar{\alpha})$, 
$A=A_i$ and $B=A_j$.  
The elements $u^{ij}_1, \ u^{ij}_2, \ v^{ij}_1$ and $v^{ij}_2$ 
can be determined by the corresponding elements of 
the matrices 
$$U(L'_i(x_{1},x_{2};\bar{\alpha}),L'_j(y_1,y_2,\bar{\beta})), \ 
V(L'_i(x_{1},x_{2};\bar{\alpha}),L'_j(y_1,y_2,\bar{\beta})),$$ 
of (\ref{U}) and (\ref{V}). Since the map of Prop. \ref{UV}, $(X,Y) \mapsto (U,V)$, is Poisson, 
the reduced map 
$S_{\bar{\alpha},\bar{\beta}}:  \Sigma_{A_1}(\bar{\alpha}) \times \Sigma_{A_2}(\bar{\beta})
\rightarrow  \Sigma_{A_1}(\bar{\alpha}) \times \Sigma_{A_2}(\bar{\beta})$ 
is symplectic (resp. the maps $R_{\bar{\alpha},\bar{\beta}}$ and  
$T_{\bar{\alpha},\bar{\beta}}$). 
 
In order to prove the entwining YB property it suffices to show that the equation 
\begin{eqnarray} \label{pren}
(L_1'(x'_{1},x'_{2};\bar{\alpha})-\zeta A_1) (L'_2(y'_{1},y'_{2};\bar{\beta})-\zeta A_2) (L'_3(z'_{1},z'_{2};\bar{\gamma})-\zeta A_3) 
\nonumber \\ 
=
(L'(x_{1},x_{2};\bar{\alpha})-\zeta A_1) (L'_2(y_{1},y_{2};\bar{\beta})-\zeta A_2) (L'_3(z_{1},z_{2};\bar{\gamma})-\zeta A_3) 
\end{eqnarray} 
implies that $x_i'=x_i, \ y_i'=y_i, \ z_i'=z_i$, for $i=1,2$, then the proof follows 
from Prop. \ref{3product}.
We set 
\begin{eqnarray*}
X=L'_1(x_{1},x_{2};\bar{\alpha}), \ Y=L'_2(y_{1},y_{2};\bar{\beta}), \ Z=L'_3(z_{1},z_{2};\bar{\gamma}), \ \\ 
X'=L'_1(x'_{1},x'_{2};\bar{\alpha}), \ Y'=L'_2(y'_{1},y'_{2};\bar{\beta}), \ Z'=L'_3(z_{1},z_{2};\bar{\gamma}),
\end{eqnarray*}
and  
$XYZ=K$, $XYA_3+XA_2Z+A_1YZ=L$, $XA_2A_3+A_1 YA_3+ A_1 A_2 Z=M$. 
Substituting to (\ref{pren}) we come up to the system :
$$X^{\prime }Y^{\prime }Z^{\prime }=K, \
X^{\prime }Y^{\prime }A_3+X^{\prime }A_2Z^{\prime }+A_1Y^{\prime }Z^{\prime }=L, \
X^{\prime }A_2A_3+A_1 Y^{\prime }A_3+ A_1 A_2 Z^{\prime }=M$$
which implies that 
\begin{equation} \label{tritis}
(X'A_1^{-1})^{3}A_1A_2A_3 -(X'A_1^{-1})^{2}M+X'A_1^{-1}L=K.
\end{equation}
Since $$\det(X'-\zeta A_1)=\det(X-\zeta A_1)= \alpha_2\zeta^{2}- \alpha_1 \zeta + \alpha_0,$$  
with $\alpha_2=f_{2}(X;A_1), \ \alpha_1= f_{1}(X;A_1), \  \alpha_0= f_{0}(X;A_1)$,  
from Cayley-Hamilton theorem we have that
\begin{equation*}
p_{X'}^{A_1}(X'A_1^{-1})=\alpha_2(X'A_1^{-1})^{2}-\alpha_1(X'A_1^{-1})+\alpha_0 I=0.
\end{equation*}
By evaluating the powers of $X'A_1^{-1}$ from the last equation, equation (\ref{tritis}) becomes 
$$X'A_1^{-1}[\alpha_{2}^{2}L-\alpha_{2}\alpha_{1}M+(\alpha_{1}^{2}-
\alpha_{2}\alpha_{0})A_1 A_2 A_3]=
\alpha_{2}^{2}K-\alpha_{2}\alpha_{0}M+\alpha_{1}\alpha_{0}A_1 A_2 A_3$$
or 
\begin{eqnarray} \label{res}
 X'A_1^{-1}[\alpha_{2}^{2}L-\alpha_{2}\alpha_{1}M+
(\alpha_{1}^{2}-\alpha_{2}\alpha_{0})A_1 A_2 A_3] \ \ \ ~~ ~\nonumber \\ = 
XA_1^{-1}[\alpha_{2}^{2}L-\alpha_{2}\alpha_{1}M+(
\alpha_{1}^{2}-\alpha_{2}\alpha_{0}) A_1 A_2 A_3] 
+ Q
\end{eqnarray}
where
\begin{equation*}
Q = \alpha_{2}^{2}K-\alpha_{2}\alpha_{0}M+\alpha_{1}\alpha_{0}A_1 A_2 A_3 -
XA_1^{-1}[\alpha_{2}^{2}L-\alpha_{2}\alpha_{1}M+
(\alpha_{1}^{2}-\alpha_{2}\alpha_{0})A_1 A_2 A_3 ].
\end{equation*}
If we replace again $K, L, M$ by $XYZ$, $XYA_3+XA_2Z+A_1YZ$, 
$XA_2A_3+A_1 YA_3+ A_1 A_2 Z$ respectively, we can factorize $Q$ as follows
\begin{eqnarray*}
Q &=& (\alpha_{2}(XA_1^{-1})^{2}-\alpha_{1}XA_1^{-1}+\alpha_{0}I)
(\alpha_{1}A_1 A_2 A_3-\alpha_{2}(A_1 Y A_3+A_1 A_2 Z)) \\
&=& p_{X}^{A_1}(XA_1^{-1})(\alpha_{1}A_1 A_2 A_3 -\alpha_{2}(A_1 Y A_3+A_1 A_2 Z))
\end{eqnarray*}
and since $p_{X}^{A_1}(XA_1^{-1})=0$, $Q=0$.
So from (\ref{res}) 
we conclude that
$X'=X$. In a similar way we can prove that $Z'=Z$
which means that also $Y'=Y$.
\end{proof}

\end{document}